\newcommand{\ignore}[1]{} 
\newenvironment{prog}{\vspace{0.7ex}\par
\setlength{\parindent}{0.7cm}
\obeylines\@vobeyspaces\tt}{\vspace{0.7ex}\noindent
}
\newcommand{\startprog}{\begin{prog}}
\newcommand{\stopprog}{\end{prog}\noindent}
\newcommand{\depth}{\mathit{depth}}
\newcommand{\id}{{\mathit{id}}} 
\newcommand{\sleq}{\leqslant}
\newcommand{\ri}{{<\!\!\!<}}
\def\defemb#1#2{\expandafter\def\csname #1\endcsname
                              {\relax\ifmmode #2\else\hbox{$#2$}\fi}}
\newcommand{\pos}{{\cP}os}
\newcommand{\var}{{\cV}ar}
\newcommand{\Var}{{\cV}ar}
\newcommand{\dom}{{\cD}om}
\newcommand{\Dom}{{\cD}om}
\newcommand{\Ran}{{\cR}an}
\newcommand{\nil}{[\:]}
\newcommand{\toppos}{\epsilon} 
\newcommand{\ol}[1]{\overline{#1}}  
\newcommand{\ot}{\leftarrow}
\def\ri{<\!\!\!<}   
\def\res{\mathrel{\vert\grave{ }}}
\def \tuple#1{\langle #1 \rangle}
\long\def\comment#1{}
\newcommand{\Hpos}{\mathcal{H}^+}
\newcommand{\Hneg}{\mathcal{H}^-}
\newcommand{\fail}{\mathsf{fail}}
\newcommand{\iclp}{\mathit{\cS\cU}}
\newcommand{\iclpp}{{\cS\cU}^+}
\newcommand{\pplus}{\mathcal{P}^+_\lin(A,\Hpos)}
\newcommand{\mus}{\mathit{max}}
\newcommand{\mgu}{\mathsf{mgu}}
\newcommand{\lin}{\mathsf{lin}}
\newcommand{\sep}{\mathit{\;]\![\;}}
\newcommand{\cand}{{\scriptstyle \:\wedge\:}}
\newcommand{\blue}[1]{{\color{black} #1}}
\begin{document}

\title{On the Completeness of Selective Unification in\\ Concolic Testing of Logic Programs%
\thanks{This work has been partially supported by the EU (FEDER) and the 
Spanish \emph{Ministerio de Econom\'{\i}a y Competitividad} 
under grant TIN2013-44742-C4-1-R and by the
\emph{Generalitat Valenciana} under grant PROMETEO-II/2015/013
(SmartLogic).}
}

\author{Fred Mesnard\inst{1} \and \'Etienne Payet\inst{1} \and Germ\'an Vidal\inst{2}}

\institute{
  LIM - Universit\'e de la R\'eunion, France\\
  \email{\{frederic.mesnard,etienne.payet\}@univ-reunion.fr} 
\and 
MiST, DSIC, Universitat Polit\`ecnica de Val\`encia, Spain\\
  \email{gvidal@dsic.upv.es}
}

\maketitle

\pagestyle{plain}

\begin{abstract}
  Concolic testing is a popular dynamic validation technique that
  can be used for both model checking and automatic test case
  generation.
  We have recently introduced concolic testing in the context of logic
  programming. In contrast to previous approaches, the key ingredient
  in this setting is a technique to generate appropriate run-time
  goals by considering all possible ways an atom can unify with the
  heads of some program clauses. This is called ``selective''
  unification.
  In this paper, we show that the existing algorithm is not complete
  and explore different alternatives in order to have a sound and
  complete algorithm for selective unification.
\end{abstract}

\section{Introduction} \label{intro}

A popular approach to software validation is based on so called
\emph{concolic execution} \cite{GKS05,SMA05}, which combines both
\emph{conc}olic and symb\emph{olic} execution
\cite{Kin76,Cla76,APV09}.  Concolic \emph{testing} \cite{GKS05} is a
technique based on concolic execution for finding run time errors and
automatically generating test cases.  In this approach, both concrete
and symbolic executions are performed in parallel, so that concrete
executions may help to spot (run time) errors---thus avoiding false
positives---and symbolic executions are used to generate alternative
input data---new test cases---so that a good coverage is obtained.

In concolic testing of imperative programs, one should augment the
states with a so called \emph{path condition} that stores the
constraints on the variables of the symbolic execution. Then, after a
(possibly incomplete) concolic execution, these constraints are used
for producing alternative input data (e.g., by negating one of the
constraints). Furthermore, and this is one of the main advantages of
concolic testing over the original approach based solely on symbolic
execution, if the constraints in the path condition become too
complex, one can still take some values from the concrete execution to
simplify them. This is sound (but typically incomplete) and often
allows one to explore a larger execution space than just giving up (as
in the original approach based only on symbolic execution).  Some
successful tools that are based on concolic execution are, e.g., CUTE
\cite{SMA05}, SAGE \cite{GLM12}, and Java Pathfinder \cite{PR10}.

We have recently introduced concolic testing in the context of logic
programming \cite{MPV15}. There, a concolic state has the form
$\tuple{S\sep S'}$, where $S$ and $S'$ are sequences of concrete and
symbolic goals,\!\footnote{Following the linear semantics of
  \cite{SESGF11}, we consider sequences of goals to represent the
  leaves of the SLD tree built so far.} respectively. In logic
programming, the notion of \emph{symbolic} execution is very natural.
Indeed, the structure of both $S$ and $S'$ is the same---the sequences
of atoms have the same predicates and in the same order---and the only
difference is that some atoms might be less instantiated in $S'$ than
in $S$.

A key ingredient of concolic testing in logic programming is the
search for new concrete goals so that alternative paths can be
explored, thus improving the coverage achieved so far. Let us
illustrate it with an example. Consider the following (labelled) logic
program:
\[
\begin{array}{l@{~~~~~~}l@{~~~~~~}l}
  (\ell_1)~ p(s(a)). & (\ell_4)~ q(a).   & (\ell_6)~ r(a).  \\
  
  (\ell_2)~ p(s(W)) \leftarrow q(W). & (\ell_5)~ q(b). & (\ell_7)~ r(c). \\
    
  (\ell_3)~ p(f(X)) \leftarrow r(X). & \\
\end{array}
\]
Given the initial goal $p(f(a))$, a concolic execution would combine a
concrete execution of the form
\[
p(f(a)) \to_\id r(a) \to_\id true
\]
where $\id$ denotes the empty substitution, with another one for the more general goal $p(N)$:
\[
p(N) \to_{\{N/f(Y)\}} r(Y) \to_{\{Y/a\}} true
\]
that only mimicks the steps of the former derivation despite being
more general.
The technique in \cite{MPV15} would basically produce the following
concolic execution:
\[
\begin{array}{l@{}l@{}l@{}l@{}l@{}l}
  \tuple{p(f(a))_\id\sep p(N)_\id} & \leadsto_{c(\{\ell_3\},\{\ell_1,\ell_2,\ell_3\})}
                                   & \tuple{r(a)_\id &\sep & r(Y)_{\{N/f(Y)\}}} \\
                                   & \leadsto _{c(\{\ell_6\},\{\ell_6,\ell_7\})} &
                                   \tuple{\mathsf{true}_{\id}
                                     &\sep& \mathsf{true}_{\{N/f(a)\}}}
                                     \\
\end{array}
\]
where the goals are annotated with the answer computed so far. 
%
Roughly speaking, the above concolic execution 
is comprising the two standard SLD derivations for $p(f(a))$ and
$p(N)$ above. Moreover, it also includes some further information: 
the labels of the clauses that unified with each concrete and symbolic
goals.

For instance, the first step in the concolic execution above is
labelled with $c(\{\ell_3\},\{\ell_1,\ell_2,\ell_3\})$. This means
that the concrete goal only unified with clause $\ell_3$, but the
symbolic goal unified with clauses $\ell_1$, $\ell_2$ and
$\ell_3$. Therefore, when looking for new run time goals that explore
alternative paths, one should look for goals that unify with $\ell_1$
but not with $\ell_2$ and $\ell_3$, that unify with $\ell_1$ and
$\ell_2$ but not with $\ell_3$, and so forth. In general, we should
look for atoms that unify with all (and only) the feasible---i.e.,
those for which a solution exists---sets of clauses in
$\{\{\},\{\ell_1\},\{\ell_1,\ell_2\},\{\ell_1,\ell_2,\ell_3\},\{\ell_2\},\{\ell_2,\ell_3\}\}$.
Also, some additional constraints on the groundness of some arguments
are often required (e.g., to ensure that the generated goals are valid
\emph{run time} goals and, thus, will be terminating). A prototype
implementation of the concolic testing scheme for pure Prolog, called
\textsf{contest}, is publicly available from
\texttt{http://kaz.dsic.upv.es/contest.html}.

In this paper, we focus on the so called \emph{selective unification}
problem that must be solved in order to produce the alternative
goals during concolic testing. To be more precise, a selective
unification problem is determined by a tuple $\tuple{A,\Hpos,\Hneg,G}$
where
\begin{itemize}
\item $A$ is the selected atom in a symbolic goal, e.g., $p(N)$, 
\item $\Hpos$ are the atoms in the heads of the clauses we want $A$ to
  unify with, e.g., for $\{\ell_1,\ell_2\}$ in the example above, we
  have $\Hpos = \{ p(s(a)), p(s(W))\}$,
\item $\Hneg$ are the atoms in the heads of the clauses we do not want
  $A$ to unify with, e.g., for $\{\ell_1,\ell_2\}$ in the example
  above, we have $\Hneg=\{p(f(X))\}$,
\item $G$ is a set with the variables we want to be ground, e.g.,
  $\{N\}$.
\end{itemize}
In this case, the problem is satisfiable and a solution is
$\{N/s(a)\}$ since then $p(s(a))$ will unify with both atoms,
$p(s(a))$ and $p(s(W))$, but it will not unify with $p(f(X))$ and,
moreover, the variable $N$ is ground.

In contrast, the case $\{\ell_1\}$ is not feasible, since there is no
ground instance of $p(N)$ such that it unifies with $p(s(a))$ but not
with $p(s(W))$.

In \cite{MPV15}, we introduced a first algorithm for selective
unification. Unfortunately, this algorithm was incomplete. In this
paper, we further analyze this problem, identifying the potential
sources of incompleteness, proving a number of properties, and
introducing refined algorithms which are sound and complete under some
circumstances.

This paper is organized as follows. After some preliminaries in
Section~\ref{prelim}, Section~\ref{sec:sup} recalls and then extends
some of the developments in \cite{MPV15}. Then,
Section~\ref{sec:improving} introduces refined versions of the
algorithm for which we can obtain stronger results. Finally,
Section~\ref{future} concludes and points out several possibilities
for future work.

\section{Preliminaries} \label{prelim}

We assume some familiarity with the standard definitions and notations
for logic programs as introduced in \cite{Apt97}. Nevertheless, in
order to make the paper as self-contained as possible, we present in
this section the main concepts which are needed to understand our
development.

We denote by $|S|$ the cardinality of the set $S$.
In this work, we consider a first-order language with a fixed
vocabulary of predicate symbols, function symbols, and variables
denoted by $\Pi$, $\Sigma$ and $\cV$, respectively. 
%
%
We let $\cT(\Sigma,\cV)$ denote the set of \emph{terms} constructed
using symbols from $\Sigma$ and variables from $\cV$.  
Positions are used to address the nodes of a term viewed as a tree. A
\emph{position} $p$ in a term $t$, in symbols $p\in\pos(t)$, is
represented by a finite sequence of natural numbers, where $\toppos$
denotes the root position.
We let $t|_p$ denote the \emph{subterm} of $t$ at position $p$ and
$t[s]_p$ the result of \emph{replacing the subterm} $t|_p$ by the term
$s$. 
The depth $\depth(t)$ of a term $t$ is defined as: $\depth(t) = 0$ if
$t$ is a variable and $\depth(f(t_1,\ldots,t_n)) =
1+\mathsf{max}(\depth(t_1),\ldots,\depth(t_n))$, otherwise.
We say that $t|_p$ is a subterm of $t$ at depth $k$ if there are $k$
nested function symbols from the root of $t$ to the root of $t|_p$.
An \emph{atom} has the form $p(t_1,\ldots,t_n)$ with $p/n \in \Pi$ and $t_i
\in \cT(\Sigma,\cV)$ for $i = 1,\ldots,n$.  
The notion of position is extended to atoms in the natural way.  
A \emph{goal} is a finite sequence of atoms $A_1,\ldots,A_n$, where
the \emph{empty goal} is denoted by $true$.  A \emph{clause} has the
form $H \ot \cB$ where $H$ is an atom and $\cB$ is a goal (note that
we only consider \emph{definite} programs).  A logic \emph{program} is
a finite sequence of clauses.
%
$\var(s)$ denotes the set of variables in the syntactic object $s$
(i.e., $s$ can be a term, an atom, a query, or a clause).  A syntactic
object $s$ is \emph{ground} if $\var(s)=\emptyset$. In this work, we
only consider \emph{finite} ground terms.

Substitutions and their operations are defined as usual. In
particular, the set $\dom(\sigma) = \{x \in \cV \mid \sigma(x) \neq
x\}$ is called the \emph{domain} of a substitution $\sigma$. We let
$\id$ denote the empty substitution. The application of a substitution
$\theta$ to a syntactic object $s$ is usually denoted by
juxtaposition, i.e., we write $s\theta$ rather than $\theta(s)$. 
The \emph{restriction} $\theta\!\res_V$ of a substitution $\theta$ to a
set of variables $V$ is defined as follows: $x\theta\!\res_{V} =
x\theta$ if $x\in V$ and $x\theta\!\res_V = x$ otherwise. We say that
$\theta = \sigma~[V]$ if $\theta\!\res_V = \sigma\!\res_V$.
A syntactic object $s_1$ is \emph{more general} than a syntactic
object $s_2$, denoted $s_1 \sleq s_2$, if there exists a substitution
$\theta$ such that $s_2 = s_1\theta$. A \emph{variable renaming} is a
substitution that is a bijection on $\cV$. Two syntactic objects $t_1$
and $t_2$ are \emph{variants} (or equal up to variable renaming),
denoted $t_1 \sim t_2$, if $t_1 = t_2\rho$ for some variable renaming
$\rho$. A substitution $\theta$ is a unifier of two syntactic objects
$t_1$ and $t_2$ iff $t_1\theta = t_2\theta$; furthermore, $\theta$ is
the \emph{most general unifier} of $t_1$ and $t_2$, denoted by
$\mgu(t_1,t_2)$ if, for every other unifier $\sigma$ of $t_1$ and
$t_2$, we have that $\theta \sleq \sigma$. We write $t_1\approx t_2$
to denote that $t_1$ and $t_2$ unify for some substitution, which is
not relevant here. By abuse of notation, we also use $\mgu$ to denote
the most general unifier of a conjunction of equations of the form
$s_1=t_1\wedge\ldots\wedge s_n=t_n$, i.e.,
$\mgu(s_1=t_1\wedge\ldots\wedge s_n=t_n)=\theta$ if
$s_i\theta=t_i\theta$ for all $i=1,\ldots,n$ and for every other
unifier $\sigma$ of $s_i$ and $t_i$, $i=1,\ldots,n$, we have that
$\theta \sleq \sigma$.

We say that a syntactic object $o$ is \emph{linear} if it does not
contain multiple occurrences of the same variable. A substitution
$\{X_1/t_1,\ldots,X_n/t_n\}$ is \emph{linear} if $t_1,\ldots,t_n$ are
linear and, moreover, they do not share variables.

\comment{
The notion of \emph{computation rule} $\cR$ is used to select an atom
within a goal for its evaluation. Given a program $P$, a goal
$\cG\equiv A_1,\ldots,A_n$, and a computation rule $\cR$, we say that
$\cG \leadsto_{P,\cR,\sigma} \cG''$ is an \emph{SLD resolution step} for
$\cG$ with $P$ and $\cR$ if
\begin{itemize}
\item $\cR(\cG) = A_i$, $1\sleq i\sleq n$, is the selected
  atom,

\item $H \to \cB$ is a renamed apart clause of $P$ (in symbols $H \to
  \cB \ri P$),

\item $\sigma = mgu(A,H)$, and

\item $\cG' \equiv
  (A_1,\ldots,A_{i-1},\cB,A_{i+1},\ldots,A_n)\sigma$.
\end{itemize}
We often omit $P$, $\cR$ and/or $\sigma$ in the notation of an SLD
resolution step when they are clear from the context. An \emph{SLD
  derivation} is a (finite or infinite) sequence of SLD resolution
steps. We often use $\cG_0 \leadsto^\ast_\theta \cG_n$ as a shorthand for
$\cG_0 \leadsto_{\theta_1} \cG_1 \leadsto_{\theta_2} \ldots
\leadsto_{\theta_n} \cG_n$ with $\theta =
\theta_n\circ\cdots\circ\theta_1$ (where $\theta = \{\}$ if
$n=0$).
An SLD derivation $\cG \leadsto^\ast_\theta \cG'$ is \emph{successful}
when $\cG' = true$; in this case, we say that $\theta$ is the
\emph{computed answer substitution}. SLD derivations are represented
by a (possibly infinite) finitely branching tree.

Given a goal $\cG_0$, the initial state has the form
$\tuple{\cG_0;\id;\nil}$. The transition relation is labeled with 
$\mathsf{u}(\ol{\ell_n})$, denoting an unfolding step with the clauses
labeled with $\ol{\ell_n}$, $\mathsf{b}(\ell)$, denoting a
backtracking step that tries a clause labeled with $\ell$, or
$\mathsf{f}$, denoting a failing derivation.
}

\section{The Selective Unification Problem} \label{sec:sup}

In this section, we first recall the unification problem from
\cite{MPV15}. There, an algorithm for ``selective unification'' was
proposed, and it was conjectured to be complete. Here, we prove that
it is indeed incomplete and we identify 
two sources of incompleteness.

\begin{definition}[selective unification problem] \label{def:sup} Let
  $A$ be an atom with $G\subseteq\var(A)$ a set of variables, and let
  $\Hpos$ and $\Hneg$ be finite sets of atoms such that all atoms are
  pairwise variable disjoint and $A\approx B$ for all
  $B\in\Hpos\cup\Hneg$.
  %
  Then, the \emph{selective unification problem} for $A$ w.r.t.\
  $\Hpos$, $\Hneg$ and $G$ is defined as follows:
  \[
  \cP(A,\Hpos,\Hneg,G) = \left\{\sigma\!\!\res_{\var(A)} ~\begin{array}{|@{~}ll}
                                \hspace{2.3ex}\forall H\in\Hpos: A\sigma\approx
                                H \\
                                \wedge ~ \forall H\in\Hneg: \neg
                                (A\sigma\approx H) \\
                                \wedge ~ G\sigma ~\mbox{is ground} \\
                                \end{array}\right\}
  \]
\end{definition}
%
%
When the considered signature is finite, the following algorithm is
sound and complete for solving the selective unification problem:
first, bind the variables of $A$ with terms of depth $0$. If the
condition above does not hold, then we try with terms of depth $1$,
and check it again. We keep increasing the considered term depth until
a solution is found. Moreover, there exists a finite number $n$ such
that, if a solution has not been found when considering terms of depth
$n$, then the problem is not satisfiable.

\begin{theorem}
  Let $A$ be a linear atom with $G\subseteq\var(A)$, $\Hpos$ be a
  finite set of linear atoms and $\Hneg$ be a finite set of atoms such
  that all atoms are pairwise variable disjoint and $A\approx B$ for
  all $B\in\Hpos\cup\Hneg$. Then, checking that
  $\cP(A,\Hpos,\Hneg,G)\neq\emptyset$ is decidable.
\end{theorem}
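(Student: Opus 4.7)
The plan is to make rigorous the enumeration strategy described just before the theorem. Concretely, I would exhibit a computable depth bound $n$ (depending on $A$, $\Hpos$, $\Hneg$) such that $\cP(A,\Hpos,\Hneg,G) \neq \emptyset$ if and only if $\cP(A,\Hpos,\Hneg,G)$ already contains a substitution $\sigma$ satisfying $\depth(x\sigma) \sleq n$ for every $x \in \var(A)$. Since $\Sigma$ is finite, the set of ground terms of depth at most $n$ is finite, so only finitely many candidate substitutions have to be tested, and the three conditions of Definition~\ref{def:sup} are individually decidable by ordinary unification and a syntactic groundness check. The theorem follows by brute enumeration.

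The heart of the argument is a \emph{truncation lemma}: from any witness $\sigma \in \cP(A,\Hpos,\Hneg,G)$, one can extract a witness $\sigma' \in \cP(A,\Hpos,\Hneg,G)$ whose range contains only ground terms of bounded depth. I would set $d = \max\{\depth(H) \mid H \in \Hpos \cup \Hneg\}$, fix a ground term $c$ of depth $0$ (the constant-free signature together with $G\neq\emptyset$ is a degenerate case handled separately), and define $\sigma'$ by traversing each $x\sigma$, replacing every subterm appearing strictly below depth $d$ by $c$, and grounding any residual variables in $\var(A)\sigma$ by $c$.

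For the positive constraints, the linearity of $A$ and of each $H\in\Hpos$ is crucial: it makes the unification essentially a position-wise matching, and the function-symbol positions of $H$ all sit at depth $\sleq d$. Truncation beyond depth $d$ leaves $A\sigma$ unchanged at those positions, so $A\sigma' \approx H$ persists. Groundness of $G\sigma'$ is immediate since the only term introduced by truncation is ground.

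The main obstacle is the negative side, especially when atoms in $\Hneg$ are non-linear: a failure $A\sigma \not\approx H$ may stem from two occurrences of the same variable of $H$ whose corresponding $A\sigma$-subterms disagree arbitrarily deep, and naive depth-$d$ truncation can destroy such a clash. My plan is therefore to replace naive truncation by a \emph{clash-aware} one: for each $H\in\Hneg$, single out a minimal clash position $q$ in the attempted unification and truncate so as to preserve the disagreement at $q$. Since $A$ is linear, the two subterms involved in a non-linearity-induced clash live under distinct variables of $A$; the positive constraints pin their common forced shape down to depth at most $d$, so a disagreement can always be relocated just below the structure fixed by $\Hpos$. This lets me take a bound of the form $n = d+1$ (or, in a conservative accounting, some small computable function of $d$), still finite and effective. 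With the refined truncation lemma in hand, the enumeration in the first paragraph yields the desired decision procedure.
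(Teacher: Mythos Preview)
Your proposal is correct and follows essentially the same approach as the paper: exhibit a computable depth bound, prove a truncation lemma showing that any witness can be cut back to one within that bound, and then decide the problem by enumerating the finitely many candidates of bounded depth. The paper's proof also takes $k=\max\{\depth(H)\mid H\in\{A\}\cup\Hpos\cup\Hneg\}$, argues (as you do) that anything below the depth of the atoms in $\Hpos$ can be freely rewritten without affecting the positive constraints, and then chooses the replacements so as to keep the negative constraints while landing at depth at most $k+1$; the only notable difference is that the paper first treats the case where \emph{all} atoms, including those in $\Hneg$, are linear, and dispatches the non-linear $\Hneg$ case in a single sentence (``not difficult but tedious, a higher depth depending on the multiple occurrences''), whereas you explicitly flag the non-linearity of $\Hneg$ as the delicate point and sketch a clash-aware truncation to handle it.
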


\begin{proof}
  Here, we assume the naive algorithm sketched above.
  Let us first consider that all atoms in $\{A\}\cup\Hpos\cup\Hneg$ are
  linear. Let $k$ be the maximum depth of the atoms in
  $\{A\}\cup\Hpos\cup\Hneg$.  Consider the set 
  \[
  \Theta' = \{ \theta\mid\dom(\theta)\subseteq\var(A),~\depth(A\theta)\sleq
  k+1\}
  \]
  On $\Theta'$, we define the binary relation 
  $\theta_1 \simeq \theta_2$ iff $A\theta_1 \sim A \theta_2$.
  The relation  $\simeq$ is an equivalence relation.
  Let $\Theta=\Theta' /\!\!\simeq$. 
  The set $\Theta$ is usually large but 
  finite. Now, we proceed by contradiction and assume that the problem is satisfiable
  but there is no solution in $\Theta$. 

  Let $\sigma\in \cP(A,\Hpos,\Hneg,G)$ be one of such solutions with
  $\sigma\not\in\Theta$. Let $k'\sleq k$ be the maximum depth of the
  atoms in $\Hpos$. Let $s_1,\ldots,s_n$ be the non-variable terms at
  depth $k'+1$ or higher in $A\sigma$, which occur at positions
  $p_1,\ldots,p_n$.  Trivially, all atoms in $\Hpos$ should have a
  variable at depth $k'$ or lesser in order to still unify with
  $A\sigma$.  Therefore, replacing these terms by any term would not
  change the fact that it unifies with all atoms in $\Hpos$. Formally,
  $(\ldots(A\sigma[t_1]_{p_1})\ldots)[t_n]_{p_n} \approx H$ for all
  $H\in\Hpos$ and for all terms $t_1,\ldots,t_n$.

  Now, let us consider the negative atoms $\Hneg$. Let us focus in the
  worst case, where the maximum depth of the atoms in $\Hneg$ is
  $k\geq k'$. Since $\neg(A\sigma\approx H)$ for all $H\in\Hneg$ and
  $(\ldots(A\sigma[t_1]_{p_1})\ldots)[t_n]_{p_n} \approx H$ for all
  $H\in\Hpos$ and for all terms $t_1,\ldots,t_n$, let us choose terms
  $t'_1,\ldots,t'_n$ such that
  $\neg((\ldots(A\sigma[t'_1]_{p_1})\ldots)[t'_n]_{p_n} \approx H)$
  for all $H\in\Hneg$ and
  $(\ldots(A\sigma[t'_1]_{p_1})\ldots)[t'_n]_{p_n}$ has depth
  $k+1$. Note that this is always possible since, in the worst case,
  for each term in the atoms of $\Hneg$ at depth $k$, we might need a
  term at depth $k+1$ (when the term in the atom of $\Hneg$ is the
  only constant of the signature, so we need to introduce a function
  symbol and another constant if the argument should be ground).
  Let $\sigma'\subseteq\dom(A)$ be a subtitution such that $A\sigma' =
  (\ldots(A\sigma[t'_1]_{p_1})\ldots)[t'_n]_{p_n}$. Then, $\sigma' \in
  \cP(A,\Hpos,\Hneg,G)$ with $\sigma'\in\Theta$ and, thus, we get a
  contradiction.


  Extending the proof to non-linear atoms is not difficult but it is
  tedious since we have to consider a higher depth that may depend on
  the multiple occurrences of the same variables.  \qed
\end{proof}
We conjecture that the above naive algorithm would
also be complete for infinite signatures (e.g., integers) since the
number of symbols in the considered atoms is finite.
Nonetheless, such algorithms may be so inefficient that they are
impractical in the context of concolic testing.

We note that the set $\cP(A,\Hpos,\Hneg,G)$ is usually
infinite. Moreover, even when considering only the \emph{most general}
solutions in this set, there may still exist more than one:

\begin{example} \label{ex1} Consider $A=p(X,Y)$,
  $\Hpos=\{p(Z,Z),p(a,b)\}$, $\Hneg=\{p(c,c)\}$ and
  $G=\emptyset$. Then, both substitutions $\{X/a,Y/U\}$ and
  $\{X/U,Y/b\}$ are most general solutions in $\cP(A,\Hpos,\Hneg,G)$.
  In principle, any of them is equally good in our context.
\end{example}
In \cite{MPV15}, we have introduced a stepwise method that,
intuitively speaking, proceeds as follows:
\begin{itemize}
\item First, we produce some ``maximal'' substitutions $\theta$ for
  $A$ such that $A\theta$ still unifies with the atoms in
  $\Hpos$. Here, we use a special set $\cU$ of fresh variables with
  $\var(\{A\}\cup\Hpos\cup\Hneg)\cap\cU=\emptyset$. The elements of
  $\cU$ are denoted by $U$, $U'$, $U_1$\dots{} Then, in $\theta$, the
  variables from $\cU$ (if any) denote positions where further binding
  \blue{\emph{might}} prevent $A\theta$ from unifying with some atom
  in $\Hpos$. 
  %
  %

\item In a second stage, we look for another substitution $\eta$ such
  that $\theta\eta$ is a solution of the selective unification
  problem, i.e., $\theta\eta \in\cP(A,\Hpos,\Hneg,G)$. Here, we
  basically follow a generate and test algorithm (as in the naive
  algorithm above), but it is now more restricted thanks to the
  bindings in $\theta$ and the fact that binding variables from $\cU$
  is not allowed.
\end{itemize}
%
%
In the first stage, we use the variables from the special set $\cU$ to
replace \emph{disagreement pairs} (see~\cite{Apt97} p.~27).
Roughly speaking, given terms $s$ and $t$, a subterm $s'$ of $s$ and a
subterm $t'$ of $t$ form a disagreement pair if the root symbols of
$s'$ and $t'$ are different, but the symbols from $s'$ up to the root of
$s$ and from $t'$ up to the root of $t$ are the same. For instance,
$X,g(a)$ and $b,h(Y)$ are disagreement pairs of the terms $f(X,g(b))$
and $f(g(a),g(h(Y)))$.
A disagreement pair $t,t'$ is called \emph{simple}
if one of the terms is a variable that does not occur in the other
term and no variable of $\cU$ occurs in $t,t'$.  We say that
the substitution $\{X/s\}$ is determined by $t,t'$ if
$\{X,s\}=\{t,t'\}$.


\begin{definition}[algorithm for positive unification] \label{alg1}
\begin{description}
\item[\textbf{Input:}] 
  an atom $A$ and a set of atoms $\Hpos$ such that all atoms are
  pairwise variable disjoint and $A\approx B$ for all $B\in\Hpos$.
\item[\textbf{Output:}] a substitution $\theta$.
\end{description}

\begin{enumerate}
\item \label{algo-msa-init}
  Let $\cB:=\{A\}\cup\Hpos$.
\item \label{algo-msa-while-simple}
  While simple disagreement pairs occur in $\cB$ do
  \begin{enumerate}
  \item nondeterministically choose a simple disagreement pair $X,t$
    (respectively, $t,X$) in $\cB$;
  \item \label{algo-msa-simple-pair}
    set $\cB$ to $\cB\eta$ where $\eta = \{X/t\}$.
  \end{enumerate}
\item \label{algo-msa-while-not-simple}
  While $|\cB|\neq 1$ do
  \begin{enumerate}
  \item nondeterministically choose a 
    disagreement pair $t,t'$ in $\cB$;
  \item \label{algo-msa-not-simple-pair} replace 
    $t,t'$ 
    with a fresh variable from $\cU$.
  \end{enumerate}
\item \label{algo-msa-return} Return $\theta\blue{\gamma}$, where
  $\cB=\{B\}$, $A\theta = B$, $\Dom(\theta)\subseteq\Var(A)$,
  \blue{and $\gamma$ is a variable renaming for the variables of
    $\var(A\theta)\backslash\cU$ with fresh variables from
    $\cV\backslash\cU$}.
\end{enumerate}
We denote by $\iclpp(A,\Hpos)$ the set of non-deterministic
substitutions computed by the above algorithm.
\end{definition}
%
%
\blue{Observe that the step (2a) involves two types of non-determinism:
\begin{itemize}
\item \emph{Don't care} nondeterminism, when there are several
  disagreement pairs $X,t$ (or $t,X$) for \emph{different}
  variables. In this case, we can select any of them and continue with
  the next step. The final solution would be the same no matter the
  selection. This is also true for step (3a), since the order in which
  the non-simple disagreement pairs are selected will not affect the
  final result.
\item \emph{Don't know} nondeterminism, when there are several
  disagreement pairs $X,t$ (or $t,X$) for the same variable $X$. In
  this case, we should consider all possibilities since they may give
  rise to different solutions.
\end{itemize}
}%

\begin{example}\label{ex-1-g}
  Let $A=p(X,Y)$ and $\Hpos=\{p(a,b),p(Z,Z)\}$. Therefore, we start
  with $\cB:=\{p(X,Y),p(a,b),p(Z,Z)\}$.  The algorithm then considers
  the simple disagreement pairs in $\cB$. From $X,a$, we get
  $\eta_1:=\{X/a\}$ and the action~(\ref{algo-msa-simple-pair}) sets
  $\cB$ to $\cB\eta_1=\{p(a,Y),p(a,b),p(Z,Z)\}$.  The substitution
  $\eta_2:=\{Y/b\}$ is determined by $Y,b$ and the
  action~(\ref{algo-msa-simple-pair}) sets $\cB$ to
  $\cB\eta_2=\{p(a,b),p(Z,Z)\}$.  Now, we have two don't know nondeterministic
  possibilities:
   \begin{itemize}
   \item If we consider the disagreement pair $a,Z$, we have a
     substitution $\eta_3:=\{Z/a\}$ and
     action~(\ref{algo-msa-simple-pair}) then sets $\cB$ to
     $\cB\eta_3=\{p(a,b),p(a,a)\}$. Now, no simple disagreement pair
     occurs in $\cB$, hence the algorithm jumps to the loop at
     line~\ref{algo-msa-while-not-simple}.
     Action~(\ref{algo-msa-not-simple-pair}) replaces the disagreement
     pair $b,a$ with a fresh variable $U\in \cU$, hence $\cB$ is set
     to $\{p(a,U)\}$.  As $|\cB|=1$ the loop at
     line~\ref{algo-msa-while-not-simple} stops and the algorithm
     returns the substitution $\{X/a,Y/U\}$.
   \item If we consider the disagreement pair $b,Z$ instead, we have  a
     substitution $\eta'_3:=\{Z/b\}$, and
     action~(\ref{algo-msa-simple-pair}) sets $\cB$ to
     $\cB\eta'_3=\{p(a,b),p(b,b)\}$. Now, by proceeding as in the
     previous case, the algorithm returns 
     $\{X/U',Y/b\}$.
   \end{itemize}
   Therefore, $\iclpp(A,\Hpos) = \{\{X/a,Y/U\},\{X/U',Y/b\}\}$.
\end{example}
%
%
The soundness of the algorithm in Definition~\ref{alg1} can then be
proved as follows (termination is straightforward, see
\cite{MPV15corr}).  Note that this result was incomplete in
\cite{MPV15} since the condition on $\Ran(\eta)$ was missing.

\begin{theorem}\label{theorem:correction-algo-pos}
  Let $A$ be an atom and $\Hpos$ be a set of atoms such that all atoms
  are pairwise variable disjoint and $A\approx B$ for all $B\in\Hpos$.
  Then, for all $\theta\in\iclpp(A,\Hpos)$, we have that $A\theta\eta
  \approx H$ for all $H\in\Hpos$ and for any \blue{idempotent}
  substitution $\eta$ with
  $\Dom(\eta)\subseteq\Var(A\theta)\backslash\cU$ \blue{and
    $\Ran(\eta)\cap(\var(\Hpos\cup\{A\})\cup\cU)=\emptyset$}.
\end{theorem}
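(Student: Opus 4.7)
The plan is to establish the result by first proving, via an invariant on the algorithm's two while loops, that $A\theta \approx H$ for every $H \in \Hpos$, and then showing that further binding by the external $\eta$ preserves unifiability thanks to the role of the special variables in $\cU$.

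First, write the final substitution as $\theta_0\gamma$, where $\theta_0$ is accumulated from Steps~2--3 and $\gamma$ is the renaming of Step~4. Fix $H \in \Hpos$ and maintain through Step~2 the following invariant: at every iteration, with accumulated simple substitution $\sigma_i$ and current multiset $\cB_i$, both $A\sigma_i$ and $H\sigma_i$ belong to $\cB_i$, and every disagreement pair between $A\sigma_i$ and $H\sigma_i$ corresponds to a disagreement pair present in $\cB_i$. Selecting a simple pair $X,t$ and extending $\sigma_i$ by $\{X/t\}$ strictly decreases the number of disagreement pairs and preserves the invariant, because $\{X/t\}$ is applied uniformly to all of $\cB_i$. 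Step~2 then terminates with a multiset $\cB$ whose remaining disagreement pairs are all non-simple: since the hypothesis $A\approx B$ rules out occurs-check obstructions, each such pair has both sides non-variable. Step~3 replaces each such pair uniformly across $\cB$ by a single fresh $U \in \cU$, collapsing $\cB$ to $\{B\}$ with $A\theta_0 = B$. The invariant then yields a finite set of positions $p_1,\ldots,p_k$ at which $B$ carries distinct $U_i \in \cU$ whereas $H\sigma$ carries specific non-$\cU$ subterms, while at all other positions $B$ and $H\sigma$ coincide. Consequently the substitution $\{U_1/H\sigma|_{p_1},\ldots,U_k/H\sigma|_{p_k}\}$ unifies $B$ with $H\sigma$, giving $A\theta_0 \approx H$.

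Next I would lift this to $\theta = \theta_0\gamma$: since $\gamma$ only renames non-$\cU$ variables of $A\theta_0$ to fresh names outside $\var(\Hpos \cup \{A\}) \cup \cU$, the $\cU$-positions of $A\theta$ are unchanged, and the renamed variables cannot clash with $H$, so $A\theta \approx H$ follows by the same positional argument. Finally, apply $\eta$ under the stated conditions: $\Dom(\eta) \subseteq \Var(A\theta) \backslash \cU$ keeps every $U_i$ untouched in $A\theta\eta$, and $\Ran(\eta) \cap (\var(\Hpos \cup \{A\}) \cup \cU) = \emptyset$ guarantees that the variables introduced by $\eta$ appear neither in $H$ nor among the $U_j$. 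The unifier obtained for the $\cU$-positions therefore extends to a unifier of $A\theta\eta$ with $H$, since $\eta$'s fresh variables can be left unconstrained; idempotency of $\eta$ avoids any cascading propagation that could void this extension.

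The main obstacle I anticipate is the precise bookkeeping for the invariant: Step~2's substitutions propagate through shared variables of $A$ and $H$, and the nondeterministic choice of non-simple disagreement pair in Step~3 must be shown not to disrupt the fact that each fresh $\cU$-variable occurs at exactly one position of $B$, so that the $U_i$ can be bound independently without conflict. A careful induction on the algorithm's trace, tracking how each position of $B$ arises from positions in $A$ and in $H$, is needed to make the positional correspondence argument fully rigorous.
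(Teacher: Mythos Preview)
Your overall architecture matches the paper's: first show that the algorithm's output $\theta$ already makes $A\theta$ unify with every $H\in\Hpos$, then argue that the further instantiation by $\eta$ cannot destroy this because $\eta$ avoids the $\cU$-variables and introduces only fresh names. Where you diverge is in the machinery. The paper packages the first part as an abstract loop invariant (their Proposition~\ref{proposition:invariant-correction-algo-pos-2}): for each $H$ there is a unifier $\mu$ of $A\theta$ and $H$ with $\dom(\mu)\subseteq\var(\Hpos\cup\{A\})\cup\cU$ and $\var(A\theta)\cap\dom(\mu)\subseteq\cU$. The second part is then discharged by a purely substitution-theoretic lemma (their Lemma~\ref{lemma:technical-1}), which from $H\mu'=A\theta'\mu'$ and the stated side conditions on $\eta$ derives $H\mu'\eta=A\theta'\eta\mu'\eta$. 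Your proposal replaces both ingredients by an explicit position-tracking argument: you locate the $\cU$-positions in $B$, exhibit the unifier $\{U_i\mapsto H\sigma|_{p_i}\}$ by hand, and then reason directly that $\eta$ leaves these positions alone. This buys concreteness at the price of heavier bookkeeping, which you rightly flag in your last paragraph; the paper's abstract route is shorter precisely because Lemma~\ref{lemma:technical-1} hides the positional analysis.

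One genuine slip to fix: your claim that ``since the hypothesis $A\approx B$ rules out occurs-check obstructions, each such pair has both sides non-variable'' is false. After Step~2 the accumulated substitution $\sigma$ mixes bindings coming from \emph{several} atoms in $\Hpos$, so an occurs-check pair between $A\sigma$ and some $H\sigma$ can arise even though $A\approx H$ held initially (take $A=p(X,Z)$, $H_1=p(f(Y),Y)$, $H_2=p(W,f(W))$; after resolving the simple pairs you are left with the non-simple pair $Y,\,f(f(Y))$). This does not wreck your argument, since Step~3 replaces that pair by a fresh $U$ just the same, and your unifier $\{U\mapsto H\sigma|_p\}$ still works; but you should drop the ``both sides non-variable'' justification and instead observe that whatever the shape of the non-simple pair, the fresh $U$ lands at a single position and can be bound to $H\sigma|_p$ without conflict (the $U$'s are fresh for $H\sigma$, so no occurs check arises in the constructed unifier).
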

In order to prove this theorem, we first need the following results,
which can be found in \cite[Appendix B.2]{MPV15corr}:

\begin{lemma}\label{lemma:technical-1}
  Suppose that $A\theta=B\theta$ for some atoms $A$ and $B$
  and some substitution $\theta$.
  Then we have $A\theta\eta=B\eta\theta\eta$ for any
  substitution $\eta$ with
  $[\Dom(\eta)\cap \Var(B)]\cap\Dom(\theta)=\emptyset$
  and
  $Ran(\eta)\cap\Dom(\theta\eta)=\emptyset$.
\end{lemma}

\begin{proposition}\label{proposition:invariant-correction-algo-pos-2}
  The loop at line~\ref{algo-msa-while-not-simple} always terminates
  and the following statement is an invariant of this loop.
  \begin{description}
  \item[$\mathrm{(inv')}$] For each $A'\in\{A\}\cup\Hpos$ there exists 
    $B\in\cB$ and a substitution $\theta$ such that
    $A'\theta=B\theta$, \blue{$\dom(\theta)\subseteq (\var(\Hpos\cup\{A\})\cup\cU)$} and
    $\Var(\cB)\cap\Dom(\theta)
    \subseteq \cU$.
  \end{description}
\end{proposition}
%
%
%
The proof of Theorem~\ref{theorem:correction-algo-pos} can now proceed
as follows:

\begin{proof}
  Upon termination of the loop at line~\ref{algo-msa-while-not-simple}
  we have $|\cB|=1$. Let $B$ be the element of $\cB$ with $A\theta=B$,
  and let $\theta'\in\iclpp(A,\Hpos)$ be a renaming of $\theta$ for
  the variables of $A\theta\backslash\cU$. 
  By Proposition~\ref{proposition:invariant-correction-algo-pos-2}, we
  have that, for all $H\in\Hpos$, there exists a substitution $\mu$
  such that $A\theta\mu= H\mu$ and the following conditions hold:
  \begin{itemize}
  \item $\dom(\mu)\subseteq(\var(\Hpos\cup\{A\})\cup\cU)$ and
  \item $\Var(A\theta)\cap\Dom(\mu) \subseteq \cU$.
  \end{itemize}
  \blue{
    Trivially, there exists a unifier $\mu'$ for $A\theta'$ and $H$
    too, and the same conditions hold:
    $\dom(\mu')\subseteq(\var(\Hpos\cup\{A\})\cup\cU)$ and
    $\Var(A\theta')\cap\Dom(\mu') \subseteq \cU$.
  
  Now, in order to apply Lemma~\ref{lemma:technical-1}, we need to
  prove the following conditions: 
  \begin{itemize}
  \item $[\dom(\eta)\cap\var(A\theta')]\cap\dom(\mu') =
    \emptyset$. This is trivially implied by the fact that
    $\dom(\eta)\subseteq\var(A\theta')\backslash\cU$ and
    $\Var(A\theta')\cap\Dom(\mu') \subseteq \cU$.

  \item $\Ran(\eta)\cap\Dom(\mu'\eta) = \emptyset$. First, since
    $\Dom(\mu'\eta)\subseteq\dom(\mu')\cup\dom(\eta)$, we prove the
    stronger claim: $\Ran(\eta)\cap\Dom(\mu') = \emptyset$ and
    $\Ran(\eta)\cap\Dom(\eta) = \emptyset$. The second condition is
    triviallly implied by the idempotency of $\eta$. Regarding the
    first condition, it is implied by
    $\Ran(\eta)\cap(\var(\Hpos\cup\{A\})\cup\cU) = \emptyset$ since
    $\dom(\mu')\subseteq(\var(\Hpos\cup\{A\})\cup\cU)$, which is true.
  \end{itemize}
  Therefore, by Lemma~\ref{lemma:technical-1}, we have that
  $A\theta'\eta\mu'\eta = H\mu'\eta$ and, thus, $A\theta'\eta$ unifies
  with $H$. Hence, we have proved that $A\theta'\eta$ unifies with
  every atom in $\Hpos$. \qed
}
\end{proof}
Now we deal with the negative atoms and the groundness constraints by
means of the following algorithm:


\begin{definition}[algorithm for selective unification] \label{alg2}
\begin{description}
\item[\textbf{Input:}] an atom $A$ with $G\subseteq\var(A)$ a set of
  variables, and two finite sets $\Hpos$ and $\Hneg$ such that all
  atoms are pairwise variable disjoint and $A\approx B$ for all
  $B\in\Hpos\cup\Hneg$.
\item[\textbf{Output:}] $\fail$ or a substitution $\theta\eta$
  (restricted to the variables of $A$).
\end{description}
\begin{enumerate}
\item \blue{Generate---using a fair algorithm---}pairs
  $(\theta,\eta)$ with $\theta\in\iclpp(A,\Hpos)$ and $\eta$ an
  \blue{idempotent} substitution such that $G\theta\eta$ is ground,
  $\Dom(\eta)\subseteq\Var(A\theta)\backslash\cU$ \blue{and
    $\Ran(\eta)\cap(\var(\Hpos\cup\{A\})\cup\cU)=\emptyset$},
  otherwise return $\fail$.
\item Check that for each $H^-\in\Hneg$, 
  $\neg (A\theta\eta\approx H^-)$,
  otherwise return $\fail$.
\item Return $\theta\eta\gamma$ (restricted to the variables of $A$),
  \blue{where $\gamma$ is a variable renaming for $A\theta\eta$ with
    fresh variables from $\cV\backslash\cU$}.
\end{enumerate}
We denote by $\iclp(A,\Hpos,\Hneg,G)$ the set of non-deterministic
(non-failing) substitutions computed by the above algorithm.
\end{definition}
\blue{Note that step (1) above is don't know nondeterministic and,
  thus, all substitutions in $\iclpp(A,\Hpos)$ should in principle be
  considered. On the other hand, computing the first solution of the
  above algorithm is enough for concolic testing.}

\blue{The soundness of the selective unification algorithm is a
  straightforward consequence of 
  Theorem~\ref{theorem:correction-algo-pos} and the fact that the
  algorithm in Definition~\ref{alg2} is basically a
  fair generate-and-test procedure.}

Unfortunately, the selective unification algorithm is not complete in general, as
Examples~\ref{ex:incomplete} and \ref{ex:incomplete2}
below illustrate.
Example~\ref{ex:incomplete} shows that the algorithm cannot
always compute all the solutions while
Example~\ref{ex:incomplete2} shows that it may even find no
solution at all for a satisfiable instance of the problem.

\begin{example}\label{ex:incomplete}
  Consider the atom $A= p(X_1,X_2)$ with $G= \{X_1\}$, and
  the sets $\Hpos = \{p(X,g(X)),p(Z,Z)\}$ and $\Hneg =
  \{p(g(b),W)\}$.
  Here, we have 
  \[
  \iclpp(A,\Hpos) =
  \{\underbrace{\{X_1/X',X_2/U\}}_{\theta_1},\underbrace{\{X_1/U,X_2/g(X')\}}_{\theta_2}\}
  \]
  %
  %
  The algorithm is able to compute the solution
  $\{X_1/g(a),X_2/U\}$ from $\theta_1$, $\eta=\{X'/g(a)\}$ and
  $\gamma=\id$.
  However,
  it cannot compute $\{X_1/g(a),X_2/g(X')\}\in\cP(A,\Hpos,\Hneg,G)$.
\end{example}
\blue{The algorithm fails here because the instantiation of
  variables from $\cU$ is not allowed. In \cite{MPV15}, it was
  incorrectly assumed that \emph{any} binding of a variable from $\cU$
  will result in a substitution $\theta'$ such that $A\theta'$ does not
  unify will all the atoms in $\Hpos$ anymore. However, the universal
  quantification was not right. For each variable from $\cU$, we can
  only ensure that \emph{there exists some} term $t$ such that binding
  this variable to $t$ will result in a substitution that prevents $A$
  from unifying with some atom in $\Hpos$.
  Therefore, since the algorithm of Definition~\ref{alg2} forbids the
  bindings of the variables in $\cU$, completeness is lost. We will
  propose a solution to this problem in the next section}


\begin{example} \label{ex:incomplete2}
  Consider $A=p(X_1,X_2)$,
  $\Hpos=\{p(X,a),p(b,Y)\}$, $\Hneg=\{p(b,a)\}$, and
  $G=\emptyset$.
  Here, we have $\iclpp(A,\Hpos) = \{\{X_1/b,X_2/a\}\}$ and, thus, the
  algorithm in Definition~\ref{alg2} fails. However, the following
  substitution $\{X_1/Z,X_2/Z\}$ is a solution, i.e.,
  $\{X_1/Z,X_2/Z\}\in\cP(A,\Hpos,\Hneg,G)$.
\end{example}
\blue{Unfortunately, we do not know how to generate such non-linear
  solutions except with the naive semi-algorithm mentioned at the
  beginning of this section, which is not generally useful in
  practice.  Therefore, in the next section we will rule out these
  solutions.}

\section{Recovering Completeness for Linear Selective Unification} \label{sec:improving}

In this section, we introduce different alternatives to recover the
completeness of the selective unification algorithm.


In the following, we only consider a subset of the solutions to the
selective unification problem, namely those which are \emph{linear}:
\[
\cP_\lin(A,\Hpos,\Hneg,G) = \{ \sigma \in \cP(A,\Hpos,\Hneg,G) \mid \sigma
~\mbox{is linear}\}
\]
i.e., we rule out solutions like those in Example~\ref{ex:incomplete2}
since we do not know how such solutions can be produced using a
constructive algorithm.
We refer to $\cP_\lin(A,\Hpos,\Hneg,G)$ as the \emph{linear selective
  unification problem}.

\subsection{A Naive Extension}
\label{sec:naive}

One of the sources of incompleteness of the algorithm in
Definition~\ref{alg2} comes from the fact that the variables from
$\cU$ cannot be bound. Therefore, one can consider a naive extension
of this algorithm as follows:

\begin{definition}[extended algorithm for selective unification] \label{alg2bis}
\begin{description}
\item[\textbf{Input:}] an atom $A$ with $G\subseteq\var(A)$ a set of
  variables, and two finite sets $\Hpos$ and $\Hneg$ such that all
  atoms are pairwise variable disjoint and $A\approx B$ for all
  $B\in\Hpos\cup\Hneg$.
\item[\textbf{Output:}] $\fail$ or a substitution $\theta\eta$
  (restricted to the variables of $A$).
\end{description}
\begin{enumerate}
\item Generate---using a fair algorithm---pairs $(\theta,\eta)$ with
  $\theta\in\iclpp(A,\Hpos)$ and $\eta$ an idempotent
  substitution such that $G\theta\eta$ is ground,
  \blue{$\Dom(\eta)\subseteq\Var(A\theta)$} and  
    $\Ran(\eta)\cap(\var(\Hpos\cup\{A\})\cup\cU)=\emptyset$,
  otherwise return $\fail$.  
\item Check that for each $H^-\in\Hneg$, 
  $\neg (A\theta\eta\approx H^-)$,
  otherwise return $\fail$.
\item Return $\theta\eta\gamma$ (restricted to the variables of $A$),
  where $\gamma$ is a variable renaming for $A\theta\eta$ with
    fresh variables from $\cV\backslash\cU$.
\end{enumerate}
We denote by $\iclp^\ast(A,\Hpos,\Hneg,G)$ the set of non-deterministic
(non-failing) substitutions computed by the above algorithm.
\end{definition}
In general, though, the above algorithm can be very inefficient since
all variables in $A\theta$ can now be bound, even those in
$\cU$. Nevertheless, one can easily define a fair procedure for
generating pairs $(\theta,\eta)$ in step (1) above which gives
priority to binding the variables in $\var(A\theta)\backslash\cU$, so
that the variables from $\cU$ are only bound when no solution can be
found otherwise.

\subsection{The Positive Unification Problem}
\label{section-algo-all-pos}

Now, we introduce a more efficient instance of the
algorithm for linear selective unification which is sound and
complete when the atoms in $A$ and $\Hpos$ are linear.
Formally, we are concerned with the following
unification problem:

\begin{definition}[positive linear unification
  problem] \label{def:pos} Let $A$ be a linear atom and let $\Hpos$ be
  a finite set of linear atoms such that all atoms are pairwise
  variable disjoint and $A\approx B$ for all $B\in\Hpos$.
  Then, the \emph{positive linear unification problem} for $A$ w.r.t.\
  $\Hpos$ is defined as follows:
  \[
  \cP^+_\lin(A,\Hpos) = \{\sigma\!\!\res_{\var(A)} \mid (\forall H\in\Hpos: A\sigma\approx
  H)~\mbox{and $\sigma$ is linear} \}
  \]
\end{definition}
Note that we do not want to find a unifier between $A$ and \emph{all}
the atoms in $\Hpos$, but a substitution $\theta$ such that $A\theta$
still unifies with \emph{each} atom in $\Hpos$ independently. So this
problem is different from the usual unification problems found in the
literature.

Clearly, $|\cP^+_\lin(A,\Hpos)|\geq 1$ since the identity substitution
is always a solution to the positive linear unification
problem. In general, though, the set $\cP^+_\lin(A,\Hpos)$ is
infinite.

\begin{example} \label{ex:pos}
  Let us consider $A=p(X)$ and $\Hpos=\{p(f(Y)), p(f(g(Z))) \}$. Then,
  we have
  $\{\id,\{X/f(X')\},\{X/f(g(X'))\},\{X/f(g(a))\},\{X/f(g(f(X')))\},\ldots\}\}\subseteq\cP^+_\lin(A,\Hpos)$,
  which is clearly infinite.
\end{example}
Therefore, in the following, we restrict our interest to so called \emph{maximal}
solutions:

\begin{definition}[maximal solution] \label{def:maximal} Let $A$ be a
  linear atom and $\Hpos$ be a finite set of linear atoms such that
  all atoms are pairwise variable disjoint and $A\approx B$ for all
  $B\in\Hpos$.  We say that a substitution $\theta \in \pplus$ is
  \emph{maximal} when the following conditions hold:
  \begin{enumerate} 
  \item\label{def:maximal-cond1} for any \blue{idempotent}
    substitution $\gamma$ with
    $\dom(\gamma)\subseteq\Var(A\theta)\setminus \cU$ \blue{and
      $\Ran(\gamma)\cap(\var(\Hpos\cup\{A\})\cup\cU)=\emptyset$},
    $(\theta\gamma)\blue{\res_{\var(A)}}$ is still an element of
    $\pplus$,
  \item\label{def:maximal-cond2} for any variable $U\in\Var(A\theta)
    \cap \cU$, we have that $(\theta\{U/t\})\!\!\res_{\var(A)}$ is not
    an element of $\pplus$ anymore for all non-variable term $t$, and
  \item \label{def:maximal-cond3} for any $X/t\in\theta$ and for all
    non-variable term $t|_p$, replacing it by a non-variable term
    rooted by a different symbol will result in a substitution which
    is not an element of $\pplus$ anymore.
  \end{enumerate}
  We let $\mus(A,\Hpos)$ denote the set of maximal solutions in $\pplus$.
\end{definition}
Intuitively speaking, given a maximal solution $\theta$, the first
condition implies that $(\theta\gamma)\!\res_{\var(A)}$ is still a
solution of the positive linear unification problem as long as no
variables from $\cU$ are bound. The second and third conditions mean
that the rest of the symbols in $\theta$ cannot be changed, i.e.,
binding a variable from $\cU$ with a non-variable term or changing any
constant or function symbol by a different one, will always result in
a substitution which is not a solution of the positive linear
unification problem anymore.

\begin{example} \label{ex:pos2} Consider, e.g., $A = p(X_1,X_2)$ and
  $\Hpos = \{p(f(Y),a), p(f(g(Z)),b) \}$. Here, we have
  $\{X_1/X',X_2/X''\}\in\pplus$ but it is not a maximal solution,
  i.e., $\{X_1/X',X_2/X''\}\not\in\mus(A,\Hpos)$ since binding $X''$
  to, e.g., $a$, will result in a substitution which is not in
  $\pplus$.  In contrast,
  $\{X_1/f(g(Z')),X_2/U\}$ is a maximal solution. However, any
  substitution of the form $\{X_1/f(g(t)),X_2/U\}$ for any
  non-variable term $t$ is not a maximal solution since the third
  condition will not hold anymore (one can change the symbols
  introduced by $t$ and still get a solution in $\pplus$).
  The substitution $\{X_1/f(Y'),X_2/U\}$ is not a maximal
  solution as well since binding $Y'$ to, e.g., $a$, will result
  in a substitution which is not in $\pplus$, hence the first
  condition does not hold. 
  And the same applies to $\{X_1/f(U'),X_2/U\}$, which is not a
  maximal solution either since we can bind $U'$ to $g(X')$ and still
  get a substitution in $\pplus$. 
\end{example}
In contrast to $\cP^+_\lin(A,\Hpos)$, the set $\mus(A,\Hpos)$ is
finite, since it is bounded by the depth of the terms in $\Hpos$.
Actually, for linear atoms in $\{A\}\cup\Hpos$, there is only
\emph{one} maximal solution.

\begin{proposition} \label{prop:mus-finite} Let $A$ be a linear atom
  and $\Hpos$ be a finite set of linear atoms such that all atoms are
  pairwise variable disjoint and $A\approx B$ for all
  $B\in\Hpos$. Then, the set $\mus(A,\Hpos)$ is a singleton (up to
  variable renaming).
\end{proposition}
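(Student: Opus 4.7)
The plan is to prove the statement in two parts. First, that $\mus(A,\Hpos)$ is nonempty, by showing that any output $\theta$ of the algorithm in Definition~\ref{alg1} lies in $\mus(A,\Hpos)$. Second, that any two maximal solutions are variants, so the set is a singleton up to variable renaming. Both parts hinge on linearity to exclude the branching illustrated in Example~\ref{ex-1-g}.

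For the first part I would take an arbitrary $\theta \in \iclpp(A,\Hpos)$ and verify the three conditions of Definition~\ref{def:maximal}. Condition~(1) is essentially a restatement of Theorem~\ref{theorem:correction-algo-pos}: for every $\gamma$ satisfying the hypotheses of that condition, $A\theta\gamma$ unifies with each $H \in \Hpos$, and a routine induction on the algorithm's steps shows that linearity of $A$ and of the atoms in $\Hpos$ propagates to $\theta$, so $(\theta\gamma)\!\res_{\var(A)}$ is linear and lies in $\pplus$. For conditions~(2) and~(3), I would trace the provenance of each symbol of $A\theta$. A variable $U \in \cU$ at some position of $A\theta$ arises in step~\ref{algo-msa-not-simple-pair} from a non-simple disagreement pair $t,t'$ whose distinct root symbols can be traced back, through the bindings applied in step~\ref{algo-msa-simple-pair}, to two distinct atoms $H, H' \in \Hpos$; hence binding $U$ to any non-variable term is incompatible with at least one of $H, H'$. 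Similarly, every non-variable symbol in $A\theta$ is inserted in step~\ref{algo-msa-simple-pair} from a term originating in some $H \in \Hpos$, and replacing a root symbol by a different one breaks unification with $H$.

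For the second part, I would first show that, under linearity, the apparent nondeterminism of steps~(2a) and~(3a) is \emph{don't care}: different choices yield outputs that coincide up to variable renaming. Step~(3a) is already don't care, as noted in the discussion following Definition~\ref{alg1}. For step~(2a), the only possible source of \emph{don't know} nondeterminism is the presence of several simple pairs $(X, t_1), (X, t_2), \ldots$ sharing the same variable $X$ of $A$; linearity of $A$ pins $X$ to a unique position $p_X$, and each $t_i$ arises from a distinct $H_i \in \Hpos$. I would argue by induction on the number of outer iterations that, whichever pair is selected, the residual $t_i$'s are eventually reconciled, either by matching non-variable symbols that survive in $A\theta$ or by being collapsed to $\cU$-variables at positions of genuine conflict. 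This yields a position-by-position characterization of $A\theta$ determined solely by $A$ and $\Hpos$. To conclude, any $\theta' \in \mus(A,\Hpos)$ must obey this same characterization: condition~(3) forces the non-variable symbols dictated by the agreements among atoms in $\Hpos$, condition~(2) forces $\cU$-variables at positions of conflict, and condition~(1) forces fresh non-$\cU$ variables elsewhere. The main obstacle is the inductive convergence argument in step~(2a); it is precisely this step that fails in the non-linear case, as witnessed by the two maximal solutions $\{X/a,Y/U\}$ and $\{X/U',Y/b\}$ of Example~\ref{ex-1-g}.
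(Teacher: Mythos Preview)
Your proposal is correct but takes a substantially different route from the paper. The paper's proof is purely definitional and addresses only uniqueness: it assumes two maximal solutions $\sigma,\theta$ that disagree at some position $p$ of some binding $X/s\in\sigma$, $X/t\in\theta$, and then performs a four-way case analysis on whether each of $s|_p$, $t|_p$ is a function/constant symbol, a $\cU$-variable, or an ordinary variable, deriving a contradiction in each case directly from conditions~(1)--(3) of Definition~\ref{def:maximal}. No reference to the algorithm is made, and existence is not argued (it is effectively deferred to Proposition~\ref{prop:completeness2}).

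Your route goes through the algorithm: you establish existence by showing $\iclpp(A,\Hpos)\subseteq\mus(A,\Hpos)$, then argue that under linearity the algorithm is don't-care nondeterministic, and finally that any maximal solution must match the resulting position-by-position characterization. This last step---conditions~(3), (2), (1) forcing, respectively, the non-variable symbols, the $\cU$-positions, and the ordinary-variable positions---is exactly the paper's case analysis in positive form, so the core of your uniqueness argument coincides with theirs. What your approach buys is a self-contained existence proof and, as a by-product, most of Proposition~\ref{prop:completeness2}; what the paper's approach buys is brevity and independence from the algorithm. Note also that the algorithm-determinism sub-argument in your Part~2 is strictly more than needed here: once you have existence plus the characterization-based uniqueness, determinism of $\iclpp_\lin$ follows rather than being required.
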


\begin{proof}
  \blue{ 


    We proceed by contradiction. Let us assume that there are two
    maximal solutions $\sigma,\theta\in \mus(A,\Hpos)$, where
    $X/s\in\sigma$ and $X/t\in\theta$ for some variable
    $X\in\var(A)$. Let us consider that $s$ and $t$ differ at position
    $p$ such that $s|_p$ and $t|_p$ are rooted by a different symbol.
    Now, we distinguish the following cases:
    \begin{itemize}
    \item If $s|_p$ and $t|_p$ are rooted by different constant or
      function symbols, we get a contradiction by condition (3) of
      maximal solution.
    \item If $s|_p$ is rooted by a constant or function symbol, while
      $t|_p$ is rooted by a variable from $\cU$ (or viceversa), we get
      a contradiction by condition (2) of maximal solution.
    \item If $s|_p$ is rooted by a constant or function symbol, while
      $t|_p$ is rooted by a variable from $\cV\backslash\cU$ (or
      viceversa), we get a contradiction either by condition (1) or
      (3) of maximal solution.
    \item Finally, if $s|_p$ is rooted by a variable from $\cU$, while
      $t|_p$ is rooted by a variable from $\cV\backslash\cU$ (or
      viceversa), we get a contradiction either by condition (1) or
      (2) of maximal solution.
    \end{itemize}
    Therefore, the set $\mus(A,\Hpos)$ is necessarily a singleton.
    \qed }
\end{proof}
Moreover, the following key property holds: a maximal solution can
always be \emph{completed} in order to get a solution to the linear
unification problem when it is satisfiable.
In order to prove this result, we need to recall the definition of
\emph{parallel composition} of substitutions, denoted by $\Uparrow$ in
\cite{Pal90}.

\begin{definition}[parallel composition \cite{Pal90}] \label{uparrow}
  Let $\theta_1$ and $\theta_2$ be two idempotent
  substitutions.  Then, we define $\Uparrow$ as follows:
  \[ 
  \theta_1 \Uparrow \theta_2 = \left\{
    \begin{array}{l@{~~}l}
      \mgu(\widehat{\theta}_1 \cand \widehat{\theta}_2) & \mbox{if $\widehat{\theta_1}\cand\widehat{\theta_2}$ has a solution (a unifier)}\\
      \mathit{fail} & \mbox{otherwise}
    \end{array}\right.
  \]
  where $\widehat{\theta}$ denotes the \emph{equational
    representation} of a substitution $\theta$, i.e., if $\theta =
  \{X_1/ t_1, \ldots, X_n/ t_n \}$ then $\widehat{\theta}
  = (X_1 = t_1\cand \cdots\cand X_n = t_n)$.
\end{definition}

\begin{proposition} \label{prop:completeness1} Let $A$ be a linear
  atom and $\Hpos$ be a finite set of linear atoms such that all atoms
  are pairwise variable disjoint and $A\approx B$ for all
  $B\in\Hpos$. Let $\theta\in\mus(A,\Hpos)$ be the maximal solution
  for $A$ and $\Hpos$. Then, if $\cP_\lin(A,\Hpos,\Hneg,G)$ is
  satisfiable (the set contains at least one substitution), then there
  exists some substitution $\gamma$ such that $\theta\gamma\in
  \cP_\lin(A,\Hpos,\Hneg,G)$.
\end{proposition}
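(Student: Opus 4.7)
My approach is to fix an arbitrary linear solution $\sigma\in\cP_\lin(A,\Hpos,\Hneg,G)$ and explicitly construct the required $\gamma$ by reading it off $\sigma$, so that $(\theta\gamma)\!\res_{\var(A)}$ equals $\sigma$ up to renaming of fresh variables; membership of $\theta\gamma$ in $\cP_\lin(A,\Hpos,\Hneg,G)$ will then be inherited from $\sigma$. The heart of the argument is a position-by-position matching between $A\theta$ and $A\sigma$, driven by the three maximality conditions of Definition~\ref{def:maximal}.

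Specifically, I would first establish the following correspondence at every position $p$ in $A$. If $A\theta|_p$ is a constant or function symbol $f$, then $A\sigma|_p$ must start with the same $f$, since otherwise transplanting the head symbol from $A\sigma|_p$ into $A\theta$ at $p$ would yield another element of $\pplus$ with a different root symbol at $p$, contradicting condition~(3) of maximality. If $A\theta|_p$ is a $\cU$-variable $U$, then $A\sigma|_p$ must itself be a variable, for otherwise binding $U$ to the non-variable term $A\sigma|_p$ would produce a substitution still in $\pplus$ (using $\sigma$ itself as a witness), contradicting condition~(2). Finally, if $A\theta|_p$ is a variable from $\cV\setminus\cU$, then $A\sigma|_p$ may be any linear term, which is permitted by condition~(1).

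Given this correspondence, together with the linearity of $A\theta$ (which follows from the linearity of $A$ and $\Hpos$ and the construction in Definition~\ref{alg1}), I would define $\gamma(Y):=A\sigma|_{p_Y}$ for each $Y\in\var(A\theta)$, where $p_Y$ is the unique position of $Y$ in $A\theta$. After renaming fresh variables in $\Ran(\gamma)$ so as to avoid clashes with $\var(\Hpos\cup\{A\})\cup\cU$, the resulting $\gamma$ is idempotent and satisfies $A\theta\gamma = A\sigma$ modulo a variable renaming. Hence $(\theta\gamma)\!\res_{\var(A)}$ coincides with $\sigma$ up to renaming, and therefore inherits positive unification with every $H\in\Hpos$, non-unification with every $H\in\Hneg$, groundness on $G$, and linearity; so $\theta\gamma\in\cP_\lin(A,\Hpos,\Hneg,G)$.

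The main obstacle is the $\cU$-variable case of the position-wise correspondence: rigorously ruling out that $A\sigma$ carries a non-variable term where $A\theta$ carries a $\cU$-variable. Since $\sigma$ uses no variables from $\cU$, the contrapositive requires converting $\sigma$ into an explicit witness that binding the corresponding $U$ in $\theta$ to an appropriate non-variable term preserves positive unification with every $H\in\Hpos$, and that this witness respects the domain and range constraints under which condition~(2) is formulated. Uniqueness of the maximal solution (Proposition~\ref{prop:mus-finite}) underpins the claim that no alternative, non-maximal configuration of $\theta$ could absorb such a mismatch, so weaving the uniqueness result together with conditions~(2) and~(3) is the core technical step of the proof; the remaining bookkeeping on $\gamma$ is then routine.
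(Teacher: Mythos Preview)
Your Case~1 claim is too strong, and this breaks the construction. Take $A=p(X)$, $\Hpos=\{p(f(Y))\}$, $\Hneg=\emptyset$, $G=\emptyset$. The maximal solution is $\theta=\{X/f(Y')\}$, while $\sigma=\id\in\cP_\lin(A,\Hpos,\Hneg,G)$. At position~$1$ we have $A\theta|_1=f(Y')$ but $A\sigma|_1=X$, a variable. Condition~(3) only rules out $A\sigma|_p$ being headed by a \emph{different} function symbol; it says nothing when $A\sigma|_p$ is a variable. Hence your conclusion that $A\sigma|_p$ ``must start with the same $f$'' fails, and with it the plan of arranging $A\theta\gamma=A\sigma$: here $A\sigma$ is strictly more general than $A\theta$, and your recipe $\gamma(Y'):=A\sigma|_{1.1}$ is not even well-defined since position $1.1$ does not exist in $A\sigma$.

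The paper avoids this by not aiming at $\sigma$ itself. It forms the parallel composition $\delta=\theta\Uparrow\sigma$ (Definition~\ref{uparrow}) and observes that $\delta$ factors as $\theta\gamma$. Since $\delta$ is also an instance of $\sigma$, the $\Hneg$- and $G$-constraints transfer automatically. The positive side is where your Case~2 is genuinely the key step: because $\sigma$ must carry a variable at every $\cU$-position of $A\theta$, the $\gamma$ arising from $\theta\Uparrow\sigma$ binds no $\cU$-variable to a non-variable term, so condition~(1) of maximality yields $\theta\gamma\in\pplus$. In short, your position-wise analysis is on the right track, but Case~1 should only conclude compatibility (same head symbol \emph{or} a variable on the $\sigma$ side), and the target for $\gamma$ should be the common refinement $\theta\Uparrow\sigma$, not $\sigma$.
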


\begin{proof}
  \blue{For simplicity, we consider that $A=p(X)$, 
    $\Hpos=\{p(t_1),\ldots,p(t_n)\}$ and
    $\Hneg=\{p(s_1),\ldots,p(s_m)\}$. 
    Since the atoms are linear, the
    claim would follow by a similar argument. Let $\theta =
    \{X/t\}\in\mus(A,\Hpos)$ be the maximal solution.
    Hence, we have $t\approx t_i$ for all $i=1,\ldots,n$.
    Let $\sigma \in \cP_\lin(A,\Hpos,\Hneg,G)$ be a solution to the
    selective unification problem. By definition of maximal solution,
    there may be other solutions to the positive unification problem,
    but every introduced symbol cannot be different if we want to
    still unify with all terms $t_1,\ldots,t_n$ by condition (3) in
    the definition of maximal solution. Therefore, both substitutions
    must be compatible, i.e., we have
    $\theta\Uparrow\sigma=\delta\neq\fail$. Furthermore, taking into
    account the negative atoms in $\Hneg$ as well as the groundness
    constraints w.r.t.\ $G$,
    $\delta$ can only introduce further bindings, but
    would never require to generalize any term introduced by $\theta$
    and, thus, $\delta$ can be decomposed as $\theta\gamma$, with
    $\theta\gamma\in \cP_\lin(A,\Hpos,\Hneg,G)$.\qed}
\end{proof}
Therefore, computing the maximal solution 
suffices to check for satisfiability. 
%
Here, we use again the algorithm in Definition~\ref{alg1} for
computing the maximal solution, with the following differences:
\begin{itemize}
\item First, both $A$ and the atoms in $\Hpos$ are now linear.
\item Moreover, step (2a) is now don't care nondeterministic, so the
  algorithm will return a single solution, which we denote by
  $\iclpp_\lin(A,\Hpos)$.
\end{itemize}
%

\begin{proposition} \label{prop:completeness2} Let $A$ be a linear
  atom and $\Hpos$ be a finite set of linear atoms such that all atoms
  are pairwise variable disjoint and $A\approx B$ for all
  $B\in\Hpos$. Then, $\iclpp_\lin(A,\Hpos)=\mus(A,\Hpos)$.
\end{proposition}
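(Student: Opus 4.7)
The plan is to show $\theta := \iclpp_\lin(A,\Hpos)$ belongs to $\mus(A,\Hpos)$ and then use Proposition~\ref{prop:mus-finite}---which tells us $\mus(A,\Hpos)$ is a singleton (up to variable renaming)---to conclude. In the linear setting step~(2a) of the algorithm becomes don't-care nondeterministic, so $\iclpp_\lin(A,\Hpos)$ is likewise a single substitution $\theta$. Membership $\theta\in\pplus$ is the easy half: Theorem~\ref{theorem:correction-algo-pos} applied with $\eta=\id$ gives $A\theta\approx H$ for every $H\in\Hpos$, and linearity of $\theta$ is preserved all along the run (each simple-pair binding $\{X/t\}$ is linear because $\{A\}\cup\Hpos$ is linear and pairwise variable disjoint and $X$ disappears from $\cB$; every non-simple pair is replaced by a fresh variable of $\cU$ used only once; the final renaming $\gamma$ is injective).

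The bulk of the work is verifying the three maximality conditions of Definition~\ref{def:maximal}. Condition~(1) is exactly Theorem~\ref{theorem:correction-algo-pos}, with the role of $\eta$ played by $\gamma$. Conditions~(2) and (3) rest on invariant $(\mathrm{inv'})$ of Proposition~\ref{proposition:invariant-correction-algo-pos-2}: on termination $|\cB|=1$ and for each $A'\in\{A\}\cup\Hpos$ there is a substitution $\mu$ with $A'\mu=A\theta\mu$, $\dom(\mu)\subseteq\var(\Hpos\cup\{A\})\cup\cU$ and $\var(A\theta)\cap\dom(\mu)\subseteq\cU$. Informally this says that every non-variable symbol of $A\theta$ sits at a position where the corresponding atom $A'$ has the same symbol, and that every variable of $\cU$ occurring in $A\theta$ sits at a position which is ``really'' a clash between two original atoms.

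For condition~(2), a variable $U\in\Var(A\theta)\cap\cU$ was introduced by step~(3b) to resolve a non-simple disagreement pair at some position $p$; since all simple pairs had already been eliminated, the bag $\cB$ at that moment contained two non-variable subterms with \emph{different} root symbols at $p$, and by the invariant these are witnessed by atoms $H_1,H_2\in\Hpos$ whose subterms at the corresponding position are non-variable with distinct root symbols. Binding $U$ to any non-variable term $t$ then forces $A\theta\{U/t\}$ to disagree with $H_1$ or with $H_2$ at $p$, so $(\theta\{U/t\})\res_{\var(A)}\notin\pplus$. For condition~(3), every non-variable symbol at some position in $A\theta$ is, by the same invariant, matched at that position by a non-variable symbol of some $H\in\Hpos$; replacing it with a different constant or function symbol destroys unifiability with $H$.

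Combining these observations gives $\theta\in\mus(A,\Hpos)$, and Proposition~\ref{prop:mus-finite} then yields the equality $\iclpp_\lin(A,\Hpos)=\mus(A,\Hpos)$ up to variable renaming. The main obstacle is the bookkeeping in the third paragraph: invariant $(\mathrm{inv'})$ must be refined into a statement about positions, tracking a position $p$ in the terminal $\cB$ back to the corresponding positions in the original atoms of $\Hpos$, and the claim that no simple disagreement pair can be hiding at $p$ just before $U$ is introduced genuinely uses the assumption that every atom in $\{A\}\cup\Hpos$ is linear---exactly the condition that rules out the pathologies of Example~\ref{ex:incomplete2}.
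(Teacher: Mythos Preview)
Your proposal is correct and follows essentially the same approach as the paper's (sketch) proof: both reduce condition~(1) to Theorem~\ref{theorem:correction-algo-pos}, condition~(3) to the fact that every symbol introduced in step~(2) comes from some atom of $\Hpos$, and condition~(2) to the fact that step~(3) only fires on genuine clashes. Your write-up is in fact more careful than the paper's---you explicitly argue linearity of $\theta$ (needed for $\theta\in\pplus$), you invoke Proposition~\ref{prop:mus-finite} to turn ``$\theta$ is maximal'' into the set equality, and you flag the positional bookkeeping behind conditions~(2) and~(3) as the real work, which the paper leaves implicit.
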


\begin{proof} (sketch)
  \blue{The fact that $\iclpp_\lin(A,\Hpos)$ returns a singleton is
    trivial by definition, since the algorithm has no don't know
    nondeterminism and no step admits a failure.

    Regarding the fact that $\theta$ is a maximal solution, let us
    prove that all three conditions in Definition~\ref{def:maximal}
    hold. The first condition of maximal solution follows by
    Theorem~\ref{theorem:correction-algo-pos}, which is proved for the
    more general case of arbitrary (possibly non-linear) atoms.
    The third condition holds from the fact that in step (2) of
    $\iclpp_\lin$ only symbols from the atoms $A$ and $\Hpos$ are
    introduced following a $mgu$-like algorithm; therefore they are
    possibly not necessary, but cannot be replaced by different
    symbols and still unify with all the atoms in $\Hpos$.
    Finally, the second condition derives from step (3) of
    $\iclpp_\lin$ where non-simple disagreement pairs are replaced by
    fresh variables from $\cU$ and, thus, any binding to a
    non-variable term would result in $A\theta$ not unifying with some
    atom of $\Hpos$.
%
    \qed}
\end{proof}

\comment{
\begin{definition}[algorithm for computing a maximal solution] \label{alg1bis}
\begin{description}
\item[\textbf{Input:}] a \blue{linear} atom $A$ and a set of
  \blue{linear} atoms $\Hpos$ such that all atoms are pairwise
  variable disjoint and $A\approx B$ for all $B\in\Hpos$.
\item[\textbf{Output:}] a substitution $\theta$.
\end{description}

\begin{enumerate}
\item \label{algo-msa-init-bis}
  Let $\cB:=\{A\}\cup\Hpos$.
\item \label{algo-msa-while-simple-bis}
  While simple disagreement pairs occur in $\cB$ do
  \begin{enumerate}
  \item \blue{don't care} nondeterministically choose a simple
    disagreement pair $X,t$ (respectively, $t,X$) in $\cB$;
  \item \label{algo-msa-simple-pair-bis}
    set $\cB$ to $\cB\eta$ where $\eta = \{X/t\}$.
  \end{enumerate}
\item \label{algo-msa-while-not-simple-bis}
  While $|\cB|\neq 1$ do
  \begin{enumerate}
  \item \blue{don't care} nondeterministically choose a disagreement
    pair $t,t'$ in $\cB$;
  \item \label{algo-msa-not-simple-pair-bis} replace all disagrement
    pairs $t,t'$ in $\cB$ \blue{by fresh variables from $\cU$}.
  \end{enumerate}
\item \label{algo-msa-return-bis} Return $\theta\blue{\gamma}$, where
  $\cB=\{B\}$, $A\theta = B$, $\Dom(\theta)\subseteq\Var(A)$, \blue{and
    $\gamma$ is a variable renaming for the variables of
    $A\theta\backslash\cU$ with fresh variables from
    $\cV\backslash\cU$}.
\end{enumerate}
We denote by $\iclpp_\lin(A,\Hpos)$ the substitution computed by the
above algorithm.
\end{definition}
}





%

\subsection{Dealing with the Negative Atoms}
\label{section-algo-all-pos-neg}

The algorithm $\iclp$ in Definition~\ref{alg2} is now redefined as
follows:

\begin{definition}[algorithm for linear selective unification] \label{alg2lin}
\begin{description}
\item[\textbf{Input:}] a linear atom $A$ with $G\subseteq\var(A)$ a set of
  variables, and two finite sets $\Hpos$ and $\Hneg$ such that the
  atoms in $\Hpos$ are linear and all
  atoms are pairwise variable disjoint and $A\approx B$ for all
  $B\in\Hpos\cup\Hneg$.
\item[\textbf{Output:}] $\fail$ or a substitution $\theta\eta$
  (restricted to the variables of $A$).
\end{description}
\begin{enumerate}
\item \blue{Let $\{\theta\}=\iclpp_\lin(A,\Hpos)$. Then,
    generate---using a fair algorithm---linear idempotent substitutions
    $\eta$ such that $G\theta\eta$ is ground,
    $\Dom(\eta)\subseteq\Var(A\theta)\backslash\cU$ and
    $\Ran(\eta)\cap(\var(\Hpos\cup\{A\})\cup\cU)=\emptyset$, otherwise
    return $\fail$.}
\item Check that for each $H^-\in\Hneg$, 
  $\neg (A\theta\eta\approx H^-)$,
  otherwise return $\fail$.
\item Return $\theta\eta\gamma$ (restricted to the variables of $A$),
  where $\gamma$ is a variable renaming for $A\theta\eta$ with fresh
  variables from $\cV\backslash\cU$.
\end{enumerate}
We denote by $\iclp_\lin(A,\Hpos,\Hneg,G)$ the set of
non-deterministic (non-failing) substitutions computed by the above
algorithm.
\end{definition}
%

\begin{example} \label{ex:posneg} Consider again $A = p(X_1,X_2)$ and
  $\Hpos = \{p(f(Y),a), p(f(g(Z)),b) \}$, together with $\Hneg =
  \{p(f(g(a)),c)\}$ and $G=\{X_1\}$. The algorithm for linear positive
  unification returns the maximal substitution
  $\{X_1/f(g(Z')),X_2/U\}$. Therefore, the algorithm for linear
  selective unification would eventually produce a solution of the
  form $\theta=\{X_1/f(g(b)),X_2/X'\}$ since $A\theta = p(f(g(b),X')$
  unifies with $p(f(Y),a)$ and $p(f(g(Z)),b)$ but not with
  $p(f(g(a)),c)$ and, moreover, $X_1$ is not ground. 
  However, if we consider a non-maximal solution, the algorithm in
  Definition~\ref{alg2} may fail, even if there exists some solution
  to the linear selective unification problem. This is the case, e.g.,
  if we consider the non-maximal solution $\{X_1/f(g(a)),X_2/U\}$.
\end{example}

\begin{theorem}[soundness]
  Let $A$ be a linear atom with $G\subseteq\var(A)$, $\Hpos$ be a
  finite set of linear atoms and $\Hneg$ be a finite set of atoms such
  that all atoms are pairwise variable disjoint and $A\approx B$ for
  all $B\in\Hpos\cup\Hneg$. Then, we have
  $\iclp_\lin(A,\Hpos,\Hneg,G)\subseteq\cP_\lin(A,\Hpos,\Hneg,G)$.
\end{theorem}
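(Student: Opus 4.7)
The plan is to fix any substitution $\sigma = (\theta\eta\gamma)\!\res_{\var(A)}$ returned by $\iclp_\lin(A,\Hpos,\Hneg,G)$, where $\{\theta\} = \iclpp_\lin(A,\Hpos)$, $\eta$ is the idempotent linear substitution produced in step~(1) of Definition~\ref{alg2lin}, and $\gamma$ is the final renaming with fresh variables from $\cV\setminus\cU$. To show $\sigma \in \cP_\lin(A,\Hpos,\Hneg,G)$, I would verify the four requirements separately: first, $A\sigma \approx H$ for every $H\in\Hpos$; second, $\neg(A\sigma \approx H^-)$ for every $H^-\in\Hneg$; third, $G\sigma$ is ground; and fourth, $\sigma$ is linear.

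For the positive part, I would invoke Theorem~\ref{theorem:correction-algo-pos}. By Proposition~\ref{prop:completeness2}, the unique element $\theta$ of $\iclpp_\lin(A,\Hpos)$ also belongs to $\iclpp(A,\Hpos)$, and step~(1) of Definition~\ref{alg2lin} imposes exactly the hypotheses that theorem requires of $\eta$: idempotency, $\Dom(\eta) \subseteq \Var(A\theta)\setminus\cU$, and $\Ran(\eta)\cap(\var(\Hpos\cup\{A\})\cup\cU)=\emptyset$. Hence $A\theta\eta \approx H$ for all $H\in\Hpos$. Since $\gamma$ renames the remaining variables of $A\theta\eta$ with fresh symbols disjoint from $\var(\Hpos)$, each unifier lifts to a unifier of $A\sigma$ and $H$.

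Conditions on the negative atoms and on groundness follow directly from the algorithm's bookkeeping. Step~(2) explicitly returns $\fail$ unless $\neg(A\theta\eta \approx H^-)$ holds for every $H^-\in\Hneg$, and step~(1) ensures $G\theta\eta$ is ground. Because $\gamma$ is a bijective renaming whose fresh range is disjoint from $\var(\Hneg)$, it cannot create a unifier with an atom of $\Hneg$ nor turn a ground term into a non-ground one, so these properties transfer to $\sigma$.

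The main obstacle, I expect, is linearity of $\sigma$, which I plan to establish in three layers. First, the linearity of $\theta$: starting from the linear, pairwise variable-disjoint collection $\{A\}\cup\Hpos$, every binding $X/t$ at step~(2a) of Definition~\ref{alg1} substitutes a variable that occurs exactly once in $\cB$ by a linear subterm, so $\cB$ remains linear; step~(3) only introduces fresh $\cU$-variables. Consequently the final $B$ with $A\theta = B$ is linear, and since $A$ itself is linear, $\theta$ is a linear substitution. Second, $\eta$ is linear by assumption and its range is disjoint from $\var(A\theta)$, because $\var(A\theta) \subseteq \var(\Hpos\cup\{A\})\cup\cU$, which is precisely what the range condition of step~(1) excludes; combined with $\Dom(\eta)\subseteq\Var(A\theta)\setminus\cU$ and linearity of $\theta$, this guarantees that $\theta\eta$ is still linear. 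Third, $\gamma$ is a variable renaming, which obviously preserves linearity. Gathering the four points yields $\sigma \in \cP_\lin(A,\Hpos,\Hneg,G)$, completing the plan.
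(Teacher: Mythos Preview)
Your proposal is correct and considerably more detailed than the paper's own proof, which is essentially a single sentence pointing to Proposition~\ref{prop:completeness2}. The paper's route is to observe that $\iclpp_\lin(A,\Hpos)=\mus(A,\Hpos)$, so $\theta$ is a maximal solution; by definition, a maximal solution lies in $\pplus$, which immediately gives \emph{both} positive unifiability (via condition~(\ref{def:maximal-cond1}) of Definition~\ref{def:maximal}) \emph{and} linearity of $\theta$ (since every element of $\pplus$ is linear by definition). The negative and groundness checks are left implicit, being enforced directly by steps~(1) and~(2) of the algorithm.

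You instead split the work: positive unifiability via Theorem~\ref{theorem:correction-algo-pos}, and linearity of $\theta$ via an operational invariant on $\cB$. Both decompositions are valid; yours is more self-contained but longer. Two minor remarks. First, your citation of Proposition~\ref{prop:completeness2} to conclude $\theta\in\iclpp(A,\Hpos)$ is off: that inclusion follows directly from $\iclpp_\lin$ being a don't-care restriction of the algorithm in Definition~\ref{alg1}, not from the proposition. Second, your invariant ``$X$ occurs exactly once in $\cB$'' is slightly loose---after several iterations a variable can occur in more than one atom of $\cB$; the correct invariant is that each \emph{individual} atom of $\cB$ stays linear, which is what you need for the final $B=A\theta$. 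A shorter alternative, closer to the paper, is to note that Proposition~\ref{prop:completeness2} already places $\theta$ in $\mus(A,\Hpos)\subseteq\pplus$, and membership in $\pplus$ forces linearity outright.
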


\begin{proof}
  \blue{The claim follows from Proposition~\ref{prop:completeness2} by
    assuming that the don't know nondeterministic substitutions
    considered in step (1) of the algorithm of
    Definition~\ref{alg2lin} are obtained by a fair generate-and-test
    algorithm which produces substitutions systematically starting
    with terms of depth $0$, then depth $1$, etc., as in the
    naive algorithm described at the beginning of
    Section~\ref{sec:sup}.  }
\end{proof}
The following result states the completeness of our algorithm. In
principle, we do not guarantee that all solutions are computed using
our algorithms, even for the linear case. However, we can ensure that
if the linear selective unification problem is satisfiable, our
algorithm will find at least one solution.

\begin{theorem}[completeness]
  Let $A$ be a linear atom with $G\subseteq\var(A)$, $\Hpos$ be a
  finite set of linear atoms and $\Hneg$ be a finite set of atoms such
  that all atoms are pairwise variable disjoint and $A\approx B$ for
  all $B\in\Hpos\cup\Hneg$. Then, if
  $\cP_\lin(A,\Hpos,\Hneg,G)\neq\emptyset$ (i.e., it is satisfiable),
  then $\iclp_\lin(A,\Hpos,\Hneg,G)\neq\emptyset$.
\end{theorem}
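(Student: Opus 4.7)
The plan is to start from a solution $\sigma\in\cP_\lin(A,\Hpos,\Hneg,G)$ and explicitly construct a pair $(\theta,\eta)$ that the algorithm of Definition~\ref{alg2lin} can produce. First, let $\theta\in\iclpp_\lin(A,\Hpos)$ be the (unique, up to renaming) substitution computed in step~(1) of the algorithm. By Proposition~\ref{prop:completeness2} we have $\{\theta\}=\mus(A,\Hpos)$, and therefore by Proposition~\ref{prop:completeness1} there exists a substitution $\gamma$ such that $\theta\gamma\in\cP_\lin(A,\Hpos,\Hneg,G)$. This $\gamma$ is the natural candidate for $\eta$; the task is then to massage $\gamma$ so that it satisfies the shape constraints imposed in step~(1) of Definition~\ref{alg2lin}, and to invoke fairness to conclude that it is in fact enumerated.

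The core step is to show that we can take $\gamma$ to be a linear idempotent substitution with $\Dom(\gamma)\subseteq\Var(A\theta)\setminus\cU$, with $\Ran(\gamma)\cap(\var(\Hpos\cup\{A\})\cup\cU)=\emptyset$, and with $G\theta\gamma$ ground. The inclusion of the domain follows from Proposition~\ref{prop:completeness1}: since $\theta$ is maximal, the parallel composition $\theta\Uparrow\sigma$ only instantiates variables coming from $\Var(A\theta)$; moreover, maximality in the sense of condition~(2) of Definition~\ref{def:maximal} implies that variables of $\cU$ in $A\theta$ correspond exactly to positions where no nonvariable symbol can be added without breaking unification with some atom of $\Hpos$, whereas $\sigma$ does unify with every atom of $\Hpos$; hence the variables of $\cU$ need not be touched. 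For the range condition, we simply rename the (fresh) variables introduced by $\gamma$ to any variables outside $\var(\Hpos\cup\{A\})\cup\cU$: since $A$, $\Hpos$, and $\Hneg$ are pairwise variable disjoint and $\gamma$ only replaces positions by ground terms (for variables in $G$) or by fresh variables, this renaming neither affects unification with $\Hpos$, nor with $\Hneg$, nor groundness of $G\theta\gamma$. Idempotence and linearity of $\gamma$ can be enforced in the same way, using that $\sigma$ itself is linear (because $\sigma\in\cP_\lin$) and that $\theta$ is linear (being the maximal solution for linear inputs).

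Once $\gamma$ is in the required shape, the check in step~(2) of Definition~\ref{alg2lin} succeeds automatically: since $\theta\gamma\in\cP_\lin(A,\Hpos,\Hneg,G)$, we have $\neg(A\theta\gamma\approx H^-)$ for every $H^-\in\Hneg$. Finally, fairness of the generate-and-test enumeration in step~(1) guarantees that every idempotent linear substitution meeting the domain/range constraints is eventually produced: the enumeration considers substitutions of bounded term depth in a systematic way, and $\gamma$ has finite depth (bounded by the maximum depth of terms occurring in $\Hpos\cup\Hneg\cup\{A\}$ plus a small constant, as in the naive algorithm of Section~\ref{sec:sup}), so it lies in some finite stratum of the enumeration. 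Hence the algorithm outputs $\theta\gamma\,\gamma'$ for a suitable renaming $\gamma'$, which shows that $\iclp_\lin(A,\Hpos,\Hneg,G)\neq\emptyset$.

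The main obstacle I anticipate is the step of extracting, from the abstract existence statement of Proposition~\ref{prop:completeness1}, a concrete $\gamma$ that satisfies the syntactic conditions of Definition~\ref{alg2lin} (in particular, $\Dom(\gamma)\subseteq\Var(A\theta)\setminus\cU$ and the range condition), without losing any of the required properties (linearity, idempotence, groundness of $G\theta\gamma$, and incompatibility with $\Hneg$). This essentially requires a careful manipulation of the parallel composition $\theta\Uparrow\sigma$, using the characterization of the $\cU$-variables given by condition~(2) of Definition~\ref{def:maximal}; everything else then follows routinely from the fairness of the generator.\qed
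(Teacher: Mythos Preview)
Your proposal is correct and follows essentially the same route as the paper: invoke Proposition~\ref{prop:completeness2} to identify the computed $\theta$ with the unique maximal solution, apply Proposition~\ref{prop:completeness1} to obtain a $\gamma$ with $\theta\gamma\in\cP_\lin(A,\Hpos,\Hneg,G)$, and then appeal to fairness of the generate-and-test enumeration in Definition~\ref{alg2lin}. The paper's proof is in fact terser than yours and does not spell out why $\gamma$ may be taken with $\Dom(\gamma)\subseteq\Var(A\theta)\setminus\cU$ and the range condition; your explicit discussion of this point via condition~(2) of Definition~\ref{def:maximal} is additional detail that the paper leaves implicit.
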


\begin{proof}
  \blue{By Proposition~\ref{prop:completeness1}, if
    $\cP_\lin(A,\Hpos,\Hneg,G)\neq\emptyset$ and $\theta$ is the
    computed maximal solution, then there exists a substitution
    $\gamma$ such that $(\theta\gamma)\!\res_{\var(A)} \in
    \cP_\lin(A,\Hpos,\Hneg,G)$. Moreover, such a substitution $\gamma$
    can be obtained by a fair generate-and-test algorithm such as that
    considered in Definition~\ref{alg2lin}. Finally, the claim follows
    by Proposition~\ref{prop:completeness2} which ensures that the
    algorithm in Definition~\ref{alg1} will always produce the maximal
    solution for $A$ and $\Hpos$.}
\end{proof}
In general, though, we cannot ensure that all solutions are computed
(which is not a drawback of the algorithm since we are only interested
in finding one solution if it exists):

\begin{example}
  Consider again $A = p(X_1,X_2)$ and $\Hpos = \{p(f(Y),a),
  p(f(g(Z)),b) \}$, together with $\Hneg = \{p(g(W),c)\}$ and
  $G=\emptyset$. The algorithm for linear positive unification returns
  the maximal substitution $\{X_1/f(g(Z')),X_2/U\}$. Therefore, it is
  impossible that the algorithm in Definition~\ref{alg2} could produce
  a solution like $\{X_1/f(X'),X_2/X''\}\in\cP_\lin(A,\Hpos,\Hneg,G)$.
\end{example}

  

\section{Discussion} \label{future}

In this paper, we have studied the soundness and completeness of
selective unification, a relevant operation in the context of concolic
testing of logic programs. 
In contrast to \cite{MPV15}, we have provided a refined correctness
result (one condition was missing in \cite{MPV15}), and we have also
identified the main sources of incompleteness for the algorithm in
\cite{MPV15}. Then, we have introduced several refinements so that the
procedure is now sound and complete w.r.t.\ linear solutions.
We are not aware of any other work that deals with the kind of
unification problems that we consider in this paper.

Clearly, the fact that we only consider linear solutions (i.e., the
relation $\cP_\lin$) means that our procedure can be incomplete in
general.  For instance, we consider the problem shown in
Example~\ref{ex:incomplete2} unsatisfiable, though a nonlinear
solution exists. Nevertheless, we do not expect this restriction to
have a significant impact in practice and, moreover, concolic testing
algorithms are usually incomplete in order to avoid a state explosion.
On the other hand, the refined algorithm in
Sections~\ref{section-algo-all-pos} and \ref{section-algo-all-pos-neg}
only considers linear atoms. This restriction may have a more
significant impact since many programs have nonlinear atoms in the
heads of the clauses and/or equalities in the bodies. In such cases,
we can still resort to using the algorithm of Section~\ref{sec:naive},
though it may be less efficient.

As for future work, we are considering to introduce a technique to
``linearize'' the atoms in $A\cup\Hpos$ by introducing some
constraints which could be solved later in the algorithm (e.g.,
replacing $p(X,X)$ by $p(X,Y)$ and the constraint $X=Y$).

Another interesting line of research involves improving the efficiency
of the selective unification algorithm. For this purpose, we plan to
investigate the conditions ensuring the following property:
\[
\mbox{if}~ \cP_\lin(A,\Hpos,\Hneg,G)=\emptyset, ~\mbox{then}~
\cP_\lin(A\theta,\Hpos,\Hneg,G)=\emptyset~\mbox{for all substitution
$\theta$}
\]
If this property indeed holds, then one could check \emph{statically}
the satisfiability of all possible selective unification problems in a
program, e.g., for atoms of the form $p(X_1,\ldots,X_n)$. We can then
use this information during concolic testing to rule out those
problems which we know are unfeasible no matter the run time values
(denoted by $\theta$).  From our preliminary experience with the tool
\texttt{contest} (\texttt{http://kaz.dsic.upv.es/contest.html}), this
might result in significant efficiency improvements.

Finally, we are also considering the definition of a possibly
approximate formulation of selective unification which could be
solved using an SMT solver. This might imply a loss of completeness,
but will surely improve the efficiency of the process.  Moreover, it
will also allow a smoother integration with the constraint solving
process which is required when extending our concolic testing
technique to full Prolog programs.

\bibliographystyle{plain}
\bibliography{biblio}

\end{document}